\newtheorem{theorem}{Theorem}
\newtheorem{lemma}{Lemma}
\newtheorem{remark}{Remark}
\newcommand{\be}{\begin{equation}}
\newcommand{\ee}{\end{equation}}
\newcommand{\ben}{\begin{eqnarray}}
\newcommand{\een}{\end{eqnarray}}
\begin{document}

\title{Central limit theorem, deformed exponentials and superstatistics}

\author{C. Vignat$^1$ and A. Plastino$^2$}

\thanks{The authors thank S. Umarov for providing a draft version of \cite{umarov}}

\address{$^1$L.T.H.I., E.P.F.L., Lausanne, Switzerland}

\address{$^2$Facultad de Ciencias Exactas, Universidad Nacional de La Plata and
CONICET, C.C. 727, 1900 La Plata, Argentina }

\email{vignat@univ-mlv.fr, plastino@uolsinectis.com.ar}

\begin{abstract}
We show that there exists a very natural, superstatistics-linked
{\it extension} of the central limit theorem (CLT) to deformed
exponentials (also called q-Gaussians): This generalization
favorably compares with the one provided by S. Umarov and C.
Tsallis  [arXiv:cond-mat/0703533], since the latter requires
a special "q-independence" condition on the data. On the contrary,
our CLT proposal applies exactly in the usual conditions in which
the classical CLT is used. Moreover, we show that, asymptotically,
the q-independence condition is naturally induced by {\it our}
version of the CLT.
\end{abstract}
\maketitle

\section{Introduction}

The central limit theorems (CLT) can be ranked among the most important
theorems in probability theory and statistics and plays an essential
role in several basic and applied disciplines, notably in statistical
mechanics. Pioneers like A. de Moivre, P.S. de Laplace, S.D. Poisson,
and C.F. Gauss have shown that the Gaussian function is the attractor
of independent additive contributions with a finite second variance.
Distinguished authors like Chebyshev, Markov, Liapounov, Feller, Lindeberg and L\'{e}vy have also made essential contributions to the CLT-theory.

The random variables to which the classical CLT refers are required
to be independent. Subsequent efforts along CLT lines have established
corresponding theorems for weakly dependent random variables as well
(see some pertinent
references in \cite{umarov,umarov2,umarov3}). However,
the CLT does not hold if correlations between far-ranging random variables
are not negligible (see \cite{Dehling}).

Recent developments in statistical mechanics that have attracted
the attention of many researches deal with strongly correlated
random variables (\cite{uno} and references therein). These
correlations do not rapidly decrease with any increasing
\textit{distance} between random variables and are often referred
to as global correlations (see \cite{GM} for a definition). 
Is there an attractor that would replace the Gaussians in such a
case?

The answer is in the affirmative, as shown in
\cite{umarov,umarov2,umarov3}, with the deformed or q-Gaussian
playing the starring role. It is asserted in \cite{umarov2} that
such a theorem cannot be obtained if we rely on classic algebra:
it needs a construction based on a special algebra, which is
called q-algebra \footnote{This should not be confused with quantum algebra as defined in \cite{Kac} }. {\it The goal of this communication} is to show that a q-generalization of the central limit theorem becomes
indeed possible
and in a very simple way without recourse to
q-algebra.

\subsection{Systems that are q-distributed}

Consider a system $\mathcal{S}$ described by a random vector $X$
with $d-$components whose covariance matrix reads \be
\label{covar} K=\langle XX^{t}\rangle \equiv EXX^{t}, \ee the
superscript $t$ indicating transposition. We say that $X$ is
$q-$Gaussian (or deformed Gaussian-) distributed if its
probability distribution function writes as described by Eqs.
(\ref{eq:q>1gaussian})-(\ref{eq:Kq>1}) below.

\begin{itemize}
\item in the case $1<q<\frac{d+4}{d+2}$ \begin{equation}
f_{X,q}\left(X\right)=
\frac{\Gamma\left(\frac{1}{q-1}\right)}{\Gamma\left(\frac{1}{q-1}
-\frac{d}{2}\right)\vert\pi\Lambda\vert^{1/2}}
\left(1+X^{t}\Lambda^{-1}X\right)^{\frac{1}{1-q}},\label{eq:q>1gaussian}\end{equation}
with  matrix $\Lambda$ being related to $K$ in the fashion
 \begin{equation}
\Lambda=\left(m-2\right)K.\label{eq:Kq>1}\end{equation}
The number of degrees  of freedom $m$ is defined in terms
 of the dimension $d$ of $X$ as \cite{vignat1}
\begin{equation}
m=\frac{2}{q-1}-d.\label{eq:mq>1}\end{equation}
\item in the case $q<1$ \begin{equation}
f_{X,q}\left(X\right)=\frac{\Gamma\left(\frac{2-q}{q-1}+\frac{d}{2}\right)}
{\Gamma\left(\frac{2-q}{1-q}\right)
\vert\pi\Sigma\vert^{1/2}}
\left(1-X^{t}\Sigma^{-1}X\right)_{+}^{\frac{1}{1-q}},
\label{eq:q<1gaussian}\end{equation} where the matrix $\Sigma$ is
related to the covariance matrix via $\Sigma=pK$. We
introduce here a parameter $p$
 defined as \begin{equation}
p=2\frac{2-q}{1-q}+d,\label{eq:pq<1}.
\end{equation}
\end{itemize}

\section{The road towards a new CLT}

As stated above, several attempts to generalize the central limit
theorem (CLT) have been published recently
\cite{umarov,umarov2,umarov3}, the aim being to have the Gaussian attractor replaced by the
q-Gaussian attractor. We recall here the standard multivariate
version of the CLT.
\begin{theorem}
\label{thm:CLT}  Let $X_{1},X_{2},\dots$ be independent and
identically distributed (i.i.d.) random vectors in
$\mathbb{R}^{d}$ with expectation $E\left[X_{i}\right]=0$ and
covariance matrix $E\left[X_{i}X_{i}^{t}\right]=K$ and let
\begin{equation}
W_{n}=\frac{1}{\sqrt{n}}\sum_{i=1}^{n}X_{i}\label{eq:sum}.\end{equation}
Then $W_n$ converges weakly to a Gaussian vector $W$ with
covariance matrix $K$, or equivalently stated \footnote{Note that
inequality between vectors $W_{n}\le\mathbf{t}$ denotes the set of
$d$ component-wise inequalities $\left\{ W_{n}(k)\le t_{k};1\le
k\le d  \right\}.$} \be \forall
\mathbf{t}\in\mathbb{R}^{d},\lim_{n\rightarrow+\infty}\Pr\left\{
W_{n}\le \mathbf{t}\right\}
=\Phi_{1}\left(\mathbf{t}\right)=\frac{1}{\vert2\pi
K\vert^{1/2}}\int_{-\infty}^{t_{1}}\dots\int_{-\infty}^{t_{d}}e^{-\frac{X^{t}K^{-1}X}{2}}dX.\ee
\end{theorem}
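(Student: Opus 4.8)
The plan is to prove the multivariate CLT stated in Theorem~\ref{thm:CLT} via characteristic functions and L\'evy's continuity theorem, which is the cleanest route to \emph{weak} convergence. First I would introduce the characteristic function of each summand, $\varphi(s) = E\,e^{i s^{t} X_{1}}$ for $s \in \mathbb{R}^{d}$, and record the two facts that do all the work: since $E[X_{1}]=0$ and $E[X_{1}X_{1}^{t}]=K$ are finite, a second-order Taylor expansion gives
\begin{equation}
\varphi(s) = 1 - \tfrac{1}{2} s^{t} K s + o\!\left(|s|^{2}\right) \qquad (s \to 0),
\end{equation}
where the remainder is controlled by dominated convergence using $|e^{iu}-1-iu+\tfrac{1}{2}u^{2}| \le \min(|u|^{2}, \tfrac{1}{6}|u|^{3})$ together with finiteness of the second moment.

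Next I would compute the characteristic function of $W_{n}$. By independence and identical distribution, $E\,e^{i s^{t} W_{n}} = \varphi\!\left(s/\sqrt{n}\right)^{n}$. Substituting the expansion above with $s/\sqrt{n}$ in place of $s$ yields
\begin{equation}
E\,e^{i s^{t} W_{n}} = \left(1 - \frac{s^{t} K s}{2n} + o\!\left(\frac{1}{n}\right)\right)^{\!n} \xrightarrow[n\to\infty]{} e^{-\frac{1}{2} s^{t} K s},
\end{equation}
where the limit follows from the elementary fact that $(1 + a_{n}/n)^{n} \to e^{a}$ whenever $a_{n}\to a$. The right-hand side is precisely the characteristic function of the centered Gaussian vector $W$ with covariance $K$, i.e.\ of the density appearing in $\Phi_{1}$.

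I would then invoke L\'evy's continuity theorem: pointwise convergence of characteristic functions to a function continuous at $0$ (here $e^{-\frac{1}{2}s^{t}Ks}$, manifestly continuous) implies weak convergence $W_{n}\Rightarrow W$. Finally, to land on the exact statement in the theorem, I would note that weak convergence is equivalent to convergence of the distribution functions $\Pr\{W_{n}\le \mathbf{t}\}\to \Pr\{W\le\mathbf{t}\}=\Phi_{1}(\mathbf{t})$ at every continuity point of the limiting law; since the Gaussian distribution function is continuous everywhere, the convergence holds for all $\mathbf{t}\in\mathbb{R}^{d}$, which is the claimed componentwise statement.

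The main obstacle is not conceptual but technical: justifying the uniformity of the $o(1/n)$ remainder in the second display. One must ensure that the error term in the Taylor expansion of $\varphi(s/\sqrt{n})$, after being raised to the $n$th power, genuinely vanishes in the limit rather than accumulating. The clean way to handle this is via the bound $\bigl|\prod_{k} z_{k} - \prod_{k} w_{k}\bigr| \le \sum_{k} |z_{k}-w_{k}|$ for complex numbers of modulus at most one, comparing $\varphi(s/\sqrt{n})^{n}$ termwise with $(1 - s^{t}Ks/2n)^{n}$; this reduces everything to the scalar estimate already established and is where the finite-second-moment hypothesis is essential (an infinite variance would break the $o(|s|^{2})$ control and force a different, stable-law limit).
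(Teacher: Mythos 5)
Your proof is correct, but note that the paper does not actually prove this statement: Theorem~1 is simply \emph{recalled} as the classical multivariate CLT (it is the known result on which the paper's new theorems are built), so there is no in-paper argument to compare against. Your route --- characteristic functions, the second-order Taylor expansion $\varphi(s)=1-\tfrac12 s^{t}Ks+o(|s|^{2})$ justified by dominated convergence, the product bound $\bigl|\prod_{k}z_{k}-\prod_{k}w_{k}\bigr|\le\sum_{k}|z_{k}-w_{k}|$ for unimodular-bounded factors, and L\'evy's continuity theorem --- is the standard textbook proof and is sound. The only point worth flagging is your final step: weak convergence gives convergence of $\Pr\{W_{n}\le\mathbf{t}\}$ only at continuity points of the limit law, and the claim that the Gaussian cdf is continuous everywhere requires $K$ to be nonsingular (otherwise the limit is degenerate and supported on a proper subspace); the paper implicitly assumes this since it writes $K^{-1}$ in the limiting density, so your argument goes through under the stated hypotheses.
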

The basic idea leading towards {\it non-conventional} CLTs is to
find conditions under which convergence to the  usual normal
cumulative density function (cdf) $\Phi_{1}$ with covariance
matrix $K$ can be replaced by convergence to a  $q-$Gaussian
cdf
\begin{equation} \Phi_{q}\left(\mathbf{t}\right)=
\int_{-\infty}^{t_{1}}\dots\int_{-\infty}^{t_{d}}f_{X,q}
\left(x\right)dx_{1}\dots dx_{d}\label{eq:Phiq>1}\end{equation}
 with $q>1$, $f_{X,q}$ as defined in (\ref{eq:q>1gaussian}) and
parameter $m$ defined by (\ref{eq:mq>1}) or, for $q<1,$\begin{equation}
\Phi_{q}\left(\mathbf{t}\right)=
\int_{-\infty}^{t_{1}}\dots\int_{-\infty}^{t_{d}}f_{X,q}
\left(x\right)dx_{1}\dots dx_{d}\label{eq:Phiq<1}\end{equation}
 with $f_{X,q}$ as defined in (\ref{eq:q<1gaussian}) and parameter
$p$ defined by (\ref{eq:pq<1}).
We note that both cases $m\rightarrow+\infty$
and $p\rightarrow+\infty$ correspond to convergence $q\rightarrow1$
to the Gaussian case.

In two recent contributions, S. Umarov and C. Tsallis 
highlight the existence of such a central limit theorem,  in the
univariate \cite{umarov2} and multivariate \cite{umarov} case,
provided there exists a certain kind of dependence, called
$q-$independence, between random vectors $X_{i}$. This
$q-$independence condition is expressed in terms of 
the notions of  $q-$Fourier transform $F_{q}$ and of
$q-$product $\otimes_{q}$ \cite{umarov,umarov2} as
\[
F_{q}\left[X_{1}+X_{2}\right]=F_{q}\left[X_{1}\right]\otimes_{q}F_{q}\left[X_{2}\right]\]
which reduces  to conventional independence for $q=1.$

We recall that the $q-$product of $x \in \mathbb{C}$ and $y \in \mathbb{C}$ is\[
x\otimes_{q}y=\left(x^{1-q}+y^{1-q}-1\right)^{\frac{1}{1-q}}\]
 and the $q-$Fourier transform of a function $f\left(x\right),x\in\mathbb{R}^{d},$
is \[
F_{q}\left[f\right]\left(\xi\right)=\int_{\mathbb{R}^{d}}
\left(f^{1-q}\left(x\right)+\left(1-q\right)ix^{t}\xi\right)^{\frac{1}{1-q}}dx.\]
However, this approach suffers from the lack of physical interpretation for
such special dependence; moreover, the $q-$Fourier transform is a
nonlinear transform (unless $q=1$) what makes its use rather difficult.

Another approach, as described in \cite{VP}, consists in keeping
the independence assumption between vectors $X_{i}$ while
replacing the $n$ terms in (\ref{eq:sum}) by a random number
$N\left(n\right)$ of terms. That is,  if the random variable
$N\left(n\right)$ follows a negative binomial distribution so as
to diverge in a specified way, then convergence to a $q-$Gaussian
distribution occurs whenever convergence occurs in the usual
sense.

\textit{In the present  contribution} we show that there exists a
much more natural way to extend the CLT, based on the Beck-Cohen
notion of superstatistics \cite{B1} (see the discussion in
\cite{NC}). Our starting point is the same as that
in Umarov's approach (i.e., assuming some kind of dependence
between the summed terms). However, the manner in which we
introduce this dependence among data is a natural one that can be
interpreted in the physical framework of the Cohen-Beck physics (see \cite{beck} for an interesting overview).

\section{Present results}
Our present results can be conveniently condensed by stating two
theorems, according to the value of parameter $q$. The essential
idea is that of suitably  introducing a chi-distributed random variable
$a$ that is independent (case $q>1$) or dependent (case $q<1$) of the data $X_{i}$,
and then constructing the following scale mixture (typical of
superstatistics \cite{NC}) \be \label{supers}
Z_{n}=\frac{1}{a\sqrt{n}}\sum_{i=1}^{n}X_{i}. \ee

\subsection{The case $q>1$}
\begin{theorem}
\label{thm:CLTq>1}
If
$X_{1},X_{2},\dots$ are i.i.d. random vectors in $\mathbb{R}^{d}$
with zero mean and covariance matrix $K$, and if $\,\,a\,\,$
denotes a random variable chi-distributed with $m$ degrees of
freedom,
  scale parameter $(m-2)^{-1/2}$, and chosen independent of the $X_{i}$,
then random vectors \begin{equation}
Z_{n}=\frac{1}{a\sqrt{n}}\sum_{i=1}^{n}X_{i}\label{eq:Zn}\end{equation}
converge weakly to a multivariate $q-$Gaussian vector $Z$ with
covariance matrix $K.$ Equivalently stated: \be
\forall\mathbf{t}\in\mathbb{R}^{d},
\lim_{n\rightarrow+\infty}\Pr\left\{ Z_{n}\le\mathbf{t}\right\}
=\Phi_{q}\left(\mathbf{t}\right);\ee
 with cdf
  $\Phi_{q}\left(\mathbf{t}\right)$ defined as in (\ref{eq:Phiq>1}).
Moreover, \be q=\frac{m+d+2}{m+d}>1.\ee
\end{theorem}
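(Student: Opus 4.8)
The plan is to exploit the simple product structure $Z_{n}=W_{n}/a$, where $W_{n}=\frac{1}{\sqrt{n}}\sum_{i=1}^{n}X_{i}$ is \emph{exactly} the normalized sum of Theorem~\ref{thm:CLT} and $a$ is the chi-distributed scale factor, independent of the $X_{i}$. By the classical CLT, $W_{n}$ converges weakly to a centered Gaussian vector $W$ with covariance $K$. Since $a$ is independent of every $X_{i}$, it is independent of $W_{n}$ for each $n$, so the joint characteristic function of $(W_{n},a)$ factors and converges to that of an independent pair $(W,a)$; hence $(W_{n},a)\Rightarrow(W,a)$. Because the chi law is supported on $(0,\infty)$ we have $a>0$ almost surely, so the map $(w,a)\mapsto w/a$ is continuous off the $\{a=0\}$ set, which carries no mass under the limit law. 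The continuous mapping theorem then gives $Z_{n}=W_{n}/a\Rightarrow W/a=:Z$.

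It remains to identify the law of $Z=W/a$ with the $q$-Gaussian $f_{X,q}$ of (\ref{eq:q>1gaussian}). Conditionally on $a$, the vector $Z$ is Gaussian with covariance $K/a^{2}$; integrating this conditional density against the chi density of $a$ yields a Gaussian scale mixture. I would evaluate this one-dimensional Gamma integral and check that it reproduces (\ref{eq:q>1gaussian}) with $\Lambda=(m-2)K$ as in (\ref{eq:Kq>1}): the scale parameter $(m-2)^{-1/2}$ is chosen precisely so that the resulting Student-t type mixture has scale matrix $(m-2)K$, matching $\Lambda$, while the $m$ degrees of freedom produce the Gamma prefactor ratio $\Gamma\!\left(\frac{1}{q-1}\right)/\Gamma\!\left(\frac{1}{q-1}-\frac{d}{2}\right)$ via the identities $\frac{1}{q-1}=\frac{m+d}{2}$ and $\frac{1}{q-1}-\frac{d}{2}=\frac{m}{2}$.

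Two bookkeeping identities then close the argument. First, by independence the covariance of $Z$ is $E[ZZ^{t}]=E[WW^{t}]\,E[a^{-2}]=K\,E[a^{-2}]$; writing $a^{2}=\chi^{2}_{m}/(m-2)$ and using $E[1/\chi^{2}_{m}]=1/(m-2)$ for $m>2$ gives $E[a^{-2}]=1$, so $Z$ has covariance exactly $K$, as claimed. Second, inverting the defining relation (\ref{eq:mq>1}), namely $m=\frac{2}{q-1}-d$, gives $q-1=\frac{2}{m+d}$ and hence $q=\frac{m+d+2}{m+d}>1$.

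The step I expect to be the main obstacle is the identification of the scale mixture: one must verify that the Gamma integral delivers simultaneously the correct power-law exponent $\frac{1}{1-q}$ and the exact normalization of (\ref{eq:q>1gaussian}), which is where the specific number of degrees of freedom $m$ and the scale $(m-2)^{-1/2}$ must conspire. The weak-convergence part is comparatively routine, the only point requiring care being the almost-sure positivity of $a$ that licenses the continuous mapping theorem for the ratio.
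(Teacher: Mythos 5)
Your proposal is correct and follows essentially the same route as the paper: the classical multivariate CLT gives $W_{n}\Rightarrow W$, joint convergence of the independent pair $(W_{n},a)$ plus the continuous mapping theorem (the paper cites Billingsley, Th.~2.8) yields $Z_{n}\Rightarrow W/a$, and the limit is identified as the $q$-Gaussian scale mixture with $\Lambda=(m-2)K$ and $q=\frac{m+d+2}{m+d}$. Your added checks (positivity of $a$ licensing the continuous mapping step, and $E[a^{-2}]=1$ giving covariance exactly $K$) are details the paper leaves implicit.
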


\begin{proof}
First we note that the $\chi-$density with $m$ degrees of freedom
and scale parameter $\frac{1}{\sqrt{m-2}}$ is\[
f_{a}\left(a\right)=
\frac{2^{1-\frac{m}{2}}\left(m-2\right)^{\frac{m}{2}}}{\Gamma\left(\frac{m}{2}\right)}
a^{m-1}e^{-\frac{a^{2}\left(m-2\right)}{2}}.\] Now, by the
multivariate central limit theorem {\bf \ref{thm:CLT}} above
\footnote{note that, below,  symbol $\Rightarrow$ denotes weak
convergence}
 \[
\frac{1}{\sqrt{n}}\sum_{i=1}^{n}X_{i}\Rightarrow N\]
 where $N$ is a normal vector in $\mathbb{R}^{d}$ with covariance
matrix $K$. Applying from \cite{Bill} its result [Th. 2.8] we deduce that
\[
Z_{n}\Rightarrow\frac{N}{a}\]
where $\frac{N}{a}$  follows a q-Gaussian distribution with covariance matrix $K$
and parameter $q$ defined by (\ref{eq:mq>1}).
\end{proof}

\subsection{The case $q<1$}
The extension of theorem {\bf \ref{thm:CLTq>1}} to the case $q<1$
proceeds as follows.

\begin{theorem}
If $X_{1},X_{2},\dots$are i.i.d. random vectors in
$\mathbb{R}^{d}$ with zero mean and covariance matrix $K$, and if
$\,\,a\,\,$ is  a random variable  independent of the $X_{i}$ that
is  chi-distributed  with $m$ degrees of freedom and scale
parameter $\sqrt{m-2}$, then the  random
vectors
\begin{equation} Y_{n}=
\frac{1}{b \sqrt{n}} \sum_{i=1}^{n}X_{i}
\label{eq:Yn}
\end{equation}
with
\begin{equation}
b=
\sqrt{a^{2}+ \left(\frac{1}{\sqrt{n}}
\sum_{i=1}^{n}X_{i}\right)^{t}\Lambda^{-1}\left(\frac{1}{\sqrt{n}}
\sum_{i=1}^{n}X_{i}\right)}
\end{equation}
converge weakly to a multivariate $q-$Gaussian vector $Y$ with
covariance matrix $K$ and distribution function given by
(\ref{eq:Phiq>1}). Moreover, \be q=\frac{m-4}{m-2}<1.\ee

\end{theorem}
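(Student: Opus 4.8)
The plan is to follow the same two-stage strategy as in the proof of Theorem \ref{thm:CLTq>1}: first establish weak convergence by reducing everything to the ordinary CLT plus a continuous mapping, then identify the limit law by an explicit change of variables. For the first stage, the multivariate central limit theorem (Theorem \ref{thm:CLT}) gives
\[
S_{n}:=\frac{1}{\sqrt{n}}\sum_{i=1}^{n}X_{i}\Rightarrow N,
\]
with $N$ a centered Gaussian vector of covariance $K$. Because $a$ is independent of the $X_{i}$ and its law does not depend on $n$, the pair $\left(S_{n},a\right)$ converges weakly to $\left(N,a\right)$ with $N$ and $a$ independent. Now observe that $Y_{n}=g\left(S_{n},a\right)$ for the map $g\left(s,a\right)=s/\sqrt{a^{2}+s^{t}\Lambda^{-1}s}$, which is continuous on $\mathbb{R}^{d}\times\left(0,\infty\right)$; since $a>0$ almost surely the limit puts no mass on the singular set $\{a=0\}$, so the continuous mapping theorem yields
\[
Y_{n}\Rightarrow Y:=\frac{N}{\sqrt{a^{2}+N^{t}\Lambda^{-1}N}}.
\]
This is the exact counterpart of the appeal to [Th.~2.8] of \cite{Bill} made for $q>1$; the only change is that the heavy-tailed variable $Z=N/a$ of Theorem \ref{thm:CLTq>1} is post-composed with the contraction $z\mapsto z/\sqrt{1+z^{t}\Lambda^{-1}z}$, since one checks at once that $Y=Z/\sqrt{1+Z^{t}\Lambda^{-1}Z}$. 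Thus the data-dependent rescaling by $b$ is precisely the map that carries a $q>1$ $q$-Gaussian onto a compactly supported $q<1$ one.

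For the second stage I would compute the density of $Y$ directly from the joint density of $\left(N,a\right)$. Setting $u=y^{t}\Lambda^{-1}y$ and changing coordinates from $\left(n,a\right)$ to $\left(y,b\right)$ through $n=by$ and $a=b\sqrt{1-u}$, the Jacobian is $b^{d}/\sqrt{1-u}$, and the positivity $a^{2}>0$ becomes $u<1$, i.e. the bounded region characteristic of (\ref{eq:q<1gaussian}). Writing the matrix in $b$ as $\Lambda=cK$, the Gaussian factor becomes $\exp\left(-\tfrac{1}{2}b^{2}c\,u\right)$, while the chi factor supplies $b^{m-1}\left(1-u\right)^{\left(m-1\right)/2}\exp\left(-\tfrac{1}{2}b^{2}\left(1-u\right)/s^{2}\right)$ with $s$ the scale of $a$. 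The combined exponent in $b$ is $-\tfrac{1}{2}b^{2}\left[c\,u+\left(1-u\right)/s^{2}\right]$, and the entire mechanism hinges on the bracket collapsing to a constant in $u$, which happens exactly when the scale is tuned so that $s^{2}=1/c$. With this cancellation the $b$-dependence factors out as a finite Gaussian-type moment integral, and after integrating $b$ over $\left(0,\infty\right)$ one is left with
\[
f_{Y}\left(y\right)\propto\left(1-y^{t}\Lambda^{-1}y\right)_{+}^{\left(m-2\right)/2},
\]
which is the $q<1$ $q$-Gaussian shape (\ref{eq:q<1gaussian}) with $\tfrac{1}{1-q}=\tfrac{m-2}{2}$, i.e. $q=\frac{m-4}{m-2}$.

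The main obstacle is this last identification rather than the weak convergence, which is essentially automatic. The delicate point is the exact cancellation in the bracket $c\,u+\left(1-u\right)/s^{2}$: it is what forces the scale of $a$ to be tied to the degrees of freedom, and with any other scale the limit would carry a spurious factor $\left[c\,u+\left(1-u\right)/s^{2}\right]^{-\left(m+d\right)/2}$ and fail to be a pure $q$-Gaussian. I would therefore devote the bulk of the argument to bookkeeping the chi normalization constant and verifying this collapse. Finally, matching the quadratic form in the resulting density against $\Sigma^{-1}$ with $\Sigma=pK$ and the normalization against (\ref{eq:q<1gaussian}), the requirement that the covariance equal $K$ fixes the proportionality constant to $c=m+d$ (equivalently $p=m+d$), closing the proof.
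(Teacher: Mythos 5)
Your argument is correct in substance and shares the paper's skeleton for the convergence step: the paper likewise observes that $Y_n=\phi(Z_n)$ with $\phi(z)=z/\sqrt{1+z^{t}\Lambda^{-1}z}$, invokes Theorem \ref{thm:CLTq>1} to get $Z_n\Rightarrow Z$, and concludes by the continuous mapping theorem; your variant, applied to the pair $(S_n,a)$ and the map $g$, is equivalent and in fact handles the singular set $\{a=0\}$ more carefully than the paper does. Where you genuinely diverge is in identifying the limit law: the paper simply cites \cite{VP} for the fact that $\phi$ carries the $q>1$ density (\ref{eq:q>1gaussian}) onto the $q<1$ density (\ref{eq:q<1gaussian}), whereas you derive the density of $Y$ from scratch via the change of variables $(n,a)\mapsto(y,b)$. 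Your Jacobian $b^{d}/\sqrt{1-u}$ is right, and the collapse of the exponent $-\tfrac{1}{2}b^{2}\left[cu+(1-u)/s^{2}\right]$ precisely when $s^{2}=1/c$ is indeed the mechanism that makes the construction work; this buys a self-contained proof and makes visible the parameter bookkeeping that the citation hides. In particular your computation exposes a normalization tension in the statement itself: with $\Lambda=(m-2)K$ as in (\ref{eq:Kq>1}), the collapse forces the chi scale $(m-2)^{-1/2}$ (the one used in Theorem \ref{thm:CLTq>1}, not the $\sqrt{m-2}$ written here), and the resulting density $\propto\left(1-y^{t}\Lambda^{-1}y\right)_{+}^{(m-2)/2}$ has support matrix $\Sigma=(m-2)K$, hence covariance $\frac{m-2}{m+d}K$ rather than $K$, since $p=m+d$ for $q=\frac{m-4}{m-2}$. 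Your closing sentence ``fixes'' $c=m+d$ to restore covariance $K$, but that choice is incompatible with $\Lambda=(m-2)K$ and with the collapse condition $c=1/s^{2}$ simultaneously; rather than leave $c$ floating, you should state explicitly that the shape of the limit (the exponent, hence $q=\frac{m-4}{m-2}$) is correct as claimed, while the scale normalizations in the theorem as printed cannot all hold at once.
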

\begin{proof}
If $Z$ has a characteristic distribution function (cdf) given by
(\ref{eq:Phiq>1}), then \cite{VP}\[
Y=\phi\left(Z\right)=\frac{Z}{\sqrt{1+Z^{t}\Lambda^{-1}Z}}\]
 has cdf given by (\ref{eq:Phiq<1}). Since the function
 $\phi=\mathbb{R}^{d}\rightarrow \{Y\in\mathbb{R}^d \, \vert Y^t\Lambda^{-1}Y\le1\}$
is continuous, the desired result is deduced by application of the
continuous mapping theorem (see from \cite{van} its Theorem 2.3,
p.7).
\end{proof}

\begin{remark}
We note that $Y_{n}$ in (\ref{eq:Yn}) is a
normalized version of $Z_n$ in (\ref{eq:Zn}); however, the
fluctuation term $a$ is replaced by a fluctuation term
\[b=\sqrt{a^{2}+\left(\frac{1}{\sqrt{n}}
\sum_{i=1}^{n}X_{i}\right)^{t}\Lambda^{-1}\left(\frac{1}{\sqrt{n}}\sum_{i=1}^{n}X_{i}\right)}\]
that involves the value of the sum itself - and thus is not
independent of this sum anymore. Thus the case $q<1$ can be
considered as a fluctuating version of the usual CLT for which the
fluctuation depends of the state of the system. Moreover, it is
clear that as $n$ increases, the distribution of the fluctuation
$b$ gets closer to a chi distribution with $m+d$ degrees of
freedom.
\end{remark}

\subsection{Link with $q-$independence}

Although the extension of the CLT proposed above differs from the
ones developed  in \cite{umarov}, a link can be established
between both approaches  for large values of $n$ and for
$q>1$ as follows. Note that {\it we assume $q>1$ in the rest of
the paper}.

\begin{theorem}
(linking theorem)
 \label{thm:qindependence}
Assume $1<q<1+\frac{2}{d}$. Consider $n=n_0+n_1$ together with
the division of sum $Z_{n}$ in (\ref{eq:Zn}) into two parts as \be
Z_{n}=\frac{1}{a\sqrt{n}} \left(\sum_{i=1}^{n_{0}}X_{i}
+\sum_{i=n_{0}+1}^{n}X_{i}\right)=Z_{n}^{\left(1\right)}+Z_{n}^{\left(2\right)}.\ee
Assume that the characteristic function $\phi$ of $X_{i}$ is such
that
 $\int_{\mathbb{R}^{d}}\vert\phi\vert^{\nu}dt<\infty$
for some $\nu\ge1,$ and that data $X_i$ are  symmetric ($X_i$ and
$-X_i$ have the same distribution). Then random vectors
$Z_{n}^{\left(1\right)}$ and $Z_{n}^{\left(2\right)}$ are
asymptotically $q-$independent in the sense that\[
\forall\epsilon>0, \, \exists N \,\, {\rm such\,\, that} \,\,
n_{0}>N, n_{1}>N \Rightarrow \vert\vert F_{q} [Z_{n}^{(1)}
+Z_{n}^{(2)}]-F_{q} [Z_{n}^{(1)}]
\otimes_{q_1}F_{q}[Z_{n}^{(2)}]\vert\vert_{\infty}<\epsilon\] with
$q_1=z(q)=\frac{2q+d(1-q)}{2+d(1-q)}.$
\end{theorem}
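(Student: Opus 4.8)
The plan is to reduce the asymptotic $q$-independence statement to an \emph{exact} algebraic identity satisfied by the $q$-Fourier transforms of $q$-Gaussians, and then to control the finite-$n$ error by a density-convergence (local limit) argument. Write $\lambda_0=n_0/n$ and $\lambda_1=n_1/n$, so that $\lambda_0+\lambda_1=1$. Applying Theorem \ref{thm:CLTq>1} block by block, I would first record that for $n_0$ (resp.\ $n_1$) large the vector $Z_n^{(1)}=\frac{1}{a\sqrt n}\sum_{i=1}^{n_0}X_i$ is close to a $q$-Gaussian $G_1$ with covariance $\lambda_0 K$, the vector $Z_n^{(2)}$ is close to a $q$-Gaussian $G_2$ with covariance $\lambda_1 K$, and their sum $Z_n^{(1)}+Z_n^{(2)}=Z_n$ is close to the $q$-Gaussian $G$ with covariance $K=\lambda_0 K+\lambda_1 K$. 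The essential structural observation is that $Z_n^{(1)}$ and $Z_n^{(2)}$ are \emph{not} classically independent, since they share the common scale factor $a$; this shared fluctuation is exactly what makes their sum a $q$-Gaussian (rather than a convolution of two $q$-Gaussians, which would not be a $q$-Gaussian), and it is the mechanism that the $q$-product is designed to encode.

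The algebraic heart is then the identity
\[
F_q[G]=F_q[G_1]\otimes_{q_1}F_q[G_2],\qquad q_1=z(q)=\frac{2q+d(1-q)}{2+d(1-q)}.
\]
To prove it I would invoke the known fact (see \cite{umarov}) that the $q$-Fourier transform maps the family of $q$-Gaussians onto the family of $q_1$-Gaussians, so that $F_q[G_j](\xi)=A_j\bigl(1+c_j\,\xi^t M_j\xi\bigr)^{1/(1-q_1)}$ with each $M_j$ proportional to $K^{-1}$. Raising both sides to the power $1-q_1$ and inserting $u\otimes_{q_1}v=(u^{1-q_1}+v^{1-q_1}-1)^{1/(1-q_1)}$ linearizes the right-hand side; matching coefficients in $\xi$ then splits the claim into an amplitude relation and a quadratic relation. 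The amplitude relation is automatic, because every normalized density satisfies $F_q[\,\cdot\,](0)=\int f=1$, forcing $A_1=A_2=A=1$, so the whole identity collapses to a single scalar relation between the quadratic coefficients $c_jM_j$, which holds by direct computation precisely because $\lambda_0+\lambda_1=1$. This is the routine part.

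With the exact identity for the limits in hand, it remains to show that the finite-$n$ $q$-Fourier transforms are uniformly close to those of the limiting $q$-Gaussians. Here the two hypotheses enter. The condition $\int_{\mathbb R^d}|\phi|^\nu\,dt<\infty$ guarantees, via Fourier inversion, that once $n_0$ exceeds $\nu$ the characteristic function $\phi^{n_0}$ is integrable and $Z_n^{(1)}$ possesses a bounded continuous density (and likewise for $Z_n^{(2)}$ and the sum); a local limit theorem then yields uniform convergence of these densities to the corresponding $q$-Gaussian densities, uniformly in the ratio $n_0/n\in(0,1)$. The symmetry assumption makes all densities even and real, so the integrand $(f^{1-q}+(1-q)ix^t\xi)^{1/(1-q)}$ defining $F_q$ is well posed and its odd-order imaginary contributions vanish. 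Uniform density convergence together with a dominating bound — supplied again by the integrability hypothesis and by $1<q<1+\tfrac2d$, which keeps the relevant moments finite — lets me pass to the limit inside the integral, giving $\Vert F_q[Z_n^{(j)}]-F_q[G_j]\Vert_\infty\to0$ and similarly for the sum. Combined with continuity of the map $(u,v)\mapsto u\otimes_{q_1}v$ and with the exact identity above, this yields the claimed $\epsilon$-bound for all $n_0,n_1>N$.

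The main obstacle I anticipate is \emph{not} the algebra but exactly this last continuity step: $F_q$ depends on the pointwise values of the density through the nonlinear power $f^{1-q}$, so the ordinary weak convergence furnished by Theorem \ref{thm:CLTq>1} is far too weak to transfer directly to $F_q$. Upgrading weak convergence to uniform (or $L^1$/$L^\infty$) convergence of the densities, and then securing a dominating function that survives being raised to the power $1/(1-q)$, is the delicate technical core; the integrability condition on $\phi$ and the symmetry of the data are precisely the hypotheses tailored to make this step go through.
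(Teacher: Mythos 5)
Your proposal shares the paper's overall architecture --- prove an \emph{exact} identity $F_q[\tilde Z^{(1)}+\tilde Z^{(2)}]=F_q[\tilde Z^{(1)}]\otimes_{q_1}F_q[\tilde Z^{(2)}]$ for $q$-Gaussian surrogates (your $G_1,G_2$ coincide in law with the paper's $\tilde Z_n^{(1)},\tilde Z_n^{(2)}$, obtained by replacing each $X_i$ by a Gaussian while keeping the shared factor $a$), then control the finite-$n$ error by a triangle inequality --- but the step you yourself flag as ``the delicate technical core'' is exactly where your route does not close, and where the paper does something different and much simpler. You propose a local limit theorem to get uniform convergence of densities and then dominated convergence inside the $F_q$ integral; no dominating function for $\left(f^{1-q}(x)+(1-q)ix^{t}\xi\right)^{1/(1-q)}$, integrable over $\mathbb{R}^d$ uniformly in $n$ and $\xi$, is supplied or obviously available, and uniform convergence of densities controls nothing in the tails. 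The device you are missing is the paper's Lemma \ref{lem:function}: for $q>1$ and $\Re(z)\ge 0$ the map $x\mapsto(x^{1-q}+z)^{1/(1-q)}$ is $1$-Lipschitz on $\mathbb{R}^{+}$, so the difference of the two $F_q$ integrands is bounded pointwise by $|f_U-f_V|$ and one gets $\Vert F_q[U]-F_q[V]\Vert_\infty\le 2\,d_{TV}(U,V)$ (Lemma \ref{lem:dtv}) with no domination at all. The hypothesis $\int|\phi|^{\nu}<\infty$ then enters once, through the total-variation CLT (Theorem \ref{thm:dtvclt}), and the shared scale factor $a$ is absorbed by the contraction of total variation under a common scale mixture (Lemma \ref{lem:dtvmixture}) --- an issue your error analysis does not address, since $Z_n^{(1)}$ and $Z_n^{(2)}$ are not independent.

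Two further gaps. First, ``continuity of $(u,v)\mapsto u\otimes_{q_1}v$'' is not enough for a bound uniform in $\xi$: the map $u\mapsto(u^{1-q_1}+v^{1-q_1}-1)^{1/(1-q_1)}$ is only well controlled when $\Re(v^{1-q_1}-1)\ge 0$, and the paper's Lemma \ref{lem:dtvdtvinequality} secures this by computing $F_q^{1-q_1}[\tilde Z^{(j)}](\xi)-1=\alpha_j(q_1-1)\xi^2\ge 0$ for the Gaussian surrogates and by invoking the \emph{symmetry} of the data to make $F_q[Z^{(j)}]$ real --- this, not well-posedness of the integrand, is what the symmetry hypothesis is for. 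Second, your verification of the exact identity by ``matching quadratic coefficients because $\lambda_0+\lambda_1=1$'' is too quick: the coefficient appearing in $F_q$ of a $q$-Gaussian depends nonlinearly on its covariance, so additivity of the covariances does not by itself give additivity of these coefficients; the paper instead derives the identity from the sub-vector structure of the joint $2d$-dimensional $q$-Gaussian $[\tilde Z^{(1)t},\tilde Z^{(2)t}]^{t}$ (Theorem \ref{thm:subvectors}, via Corollary 2.3 of \cite{umarov}). In short: right skeleton, but the two analytic lemmas that actually make the limit passage work are absent, and the substitute you sketch would not go through as stated.
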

For didactic reasons we postpone the proof of this result until next Section.  We deduce from it that, asymptotically, the CLT
theorem (\ref{thm:CLTq>1}) exactly generates the $q-$independence
condition required for application of the particular CLT version
proposed in \cite{umarov,umarov2}.

\section{Proof of the linking theorem}

\subsection{Introduction}

In order to simplify the proof we will assume that vectors $X_{i}$
verify a stronger version  of the CLT  than the one stated in theorem
\ref{thm:CLT}, namely the CLT in total variation.
Now, the total variation divergence between two probability
densities $f$ and $g$ is \be
d_{TV}\left(f,g\right)=\frac{1}{2}\int_{\mathbb{R}^{d}}\vert f-g\vert.\ee
 If $U$ and $V$ are random vectors distributed according to $f$
and $g$ respectively, we will denote\[
d_{TV}\left(U,V\right)=d_{TV}\left(f,g\right).\]
 The total variation version of the CLT writes as follows
 (see \cite{van} Th. 2.31.)

\begin{theorem} (CLT in total variation)
\label{thm:dtvclt} Assume that $X_{1},X_{2},\dots$ are i.i.d
random vectors of $\mathbb{R}^{d}$ with zero expectation,
covariance matrix $K$ and characteristic function $\phi$ such that
$\int\vert\phi\vert^{\nu}dt<\infty$ for some $\nu\ge1.\,\,$ If
$W_{n}=\frac{1}{\sqrt{n}}\sum_{i=1}^{n}X_{i}$ and $W$ is a normal
vector in $\mathbb{R}^{d}$ with covariance matrix $K$ then\be
\lim_{n\rightarrow+\infty}d_{TV}\left(W_{n},W\right)=0.\ee
\end{theorem}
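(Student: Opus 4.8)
The plan is to prove this local–type CLT by establishing pointwise convergence of the \emph{densities} of $W_n$ and then invoking Scheff\'e's lemma to upgrade this to convergence in total variation. First I would note that the integrability hypothesis forces $W_n$ to possess a bounded continuous density for all $n$ large enough. Indeed, the characteristic function of $W_n$ is $\phi_n(t)=\phi(t/\sqrt n)^n$, and since $0\le|\phi|\le 1$ one has $|\phi|^n\le|\phi|^\nu$ as soon as $n\ge\nu$; hence $\int|\phi_n|\,dt<\infty$ and Fourier inversion yields a density $p_n(x)=(2\pi)^{-d}\int e^{-it^Tx}\phi_n(t)\,dt$. The normal limit $W$ likewise has the smooth density $p(x)=(2\pi)^{-d}\int e^{-it^Tx}e^{-t^TKt/2}\,dt$.

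Second, I would recall the classical characteristic–function CLT: Taylor expanding $\log\phi(s)=-\tfrac12 s^TKs+o(|s|^2)$ near the origin gives $\phi_n(t)\to e^{-t^TKt/2}$ pointwise for every fixed $t$. The substance of the argument is then to pass this convergence from the characteristic functions to the densities, i.e.\ to show $p_n(x)\to p(x)$. Since $(2\pi)^d|p_n(x)-p(x)|\le\int|\phi_n(t)-e^{-t^TKt/2}|\,dt$ uniformly in $x$, it suffices to prove that this integral tends to $0$, which I would do through the standard three–region decomposition of $\mathbb R^d$. On the central region $|t|\le\delta\sqrt n$ the local expansion provides a Gaussian bound $|\phi_n(t)|\le e^{-c|t|^2}$, so dominated convergence applies; the Gaussian tail $\int_{|t|>\delta\sqrt n}e^{-t^TKt/2}\,dt$ vanishes trivially.

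The main obstacle is controlling $\int_{|t|>\delta\sqrt n}|\phi_n(t)|\,dt=n^{d/2}\int_{|s|>\delta}|\phi(s)|^n\,ds$, where after the change of variables $s=t/\sqrt n$ the prefactor $n^{d/2}$ must be absorbed. This is exactly where the $L^\nu$ hypothesis is essential: it guarantees that the distribution is non–lattice (it even has a density), so $|\phi(s)|<1$ for $s\ne0$, and in fact $\rho:=\sup_{|s|\ge\delta}|\phi(s)|<1$ — this uniform bound being the delicate point, established by compactness on annuli together with the decay of $|\phi|$ forced by integrability. Writing $|\phi|^n\le\rho^{\,n-\nu}|\phi|^\nu$ for $n\ge\nu$ then gives $n^{d/2}\int_{|s|>\delta}|\phi|^n\le n^{d/2}\rho^{\,n-\nu}\int|\phi|^\nu\to0$, the geometric factor $\rho^{\,n}$ overwhelming $n^{d/2}$.

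Finally, having shown $p_n\to p$ pointwise with $p_n$ and $p$ genuine probability densities, I would close the argument with Scheff\'e's lemma, which turns a.e.\ convergence of densities into $L^1$ convergence, i.e.\ $d_{TV}(W_n,W)=\tfrac12\int|p_n-p|\to0$. I expect the identification of the uniform bound $\rho<1$ away from the origin to be the only genuinely technical step; everything else is the classical CLT together with routine dominated–convergence estimates.
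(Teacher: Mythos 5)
Your argument is correct, but note that the paper does not prove this statement at all: Theorem~\ref{thm:dtvclt} is imported verbatim from van der Vaart's \emph{Asymptotic Statistics} (cited as Th.~2.31 there), so there is no in-paper proof to compare against. What you have written is the standard self-contained proof of the local CLT upgraded to total variation: Fourier inversion (valid for $n\ge\nu$ since $|\phi|^n\le|\phi|^\nu$), the three-region estimate of $\int|\phi_n-e^{-t^TKt/2}|\,dt$, and Scheff\'e's lemma. You correctly identify the only delicate point, namely $\rho:=\sup_{|s|\ge\delta}|\phi(s)|<1$; for completeness one should spell out that $\int|\phi|^\nu<\infty$ rules out $|\phi(s_0)|=1$ for $s_0\ne0$ (else $|\phi|$ is periodic in the $s_0$ direction and cannot be in $L^\nu$), and that a sequence $s_k\to\infty$ with $|\phi(s_k)|\to1$ would, by the uniform continuity of $\phi$, produce disjoint balls of fixed radius on which $|\phi|\ge\tfrac12$, again contradicting integrability. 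A side benefit of the hypothesis worth recording is that it forces $K$ to be nonsingular (a singular $K$ would put $X_1$ on a hyperplane and make $|\phi|$ identically $1$ along a line), so the Gaussian limit genuinely has a density and the central-region bound $|\phi(s)|\le e^{-c|s|^2}$ holds. With those details filled in, the proof is complete and is essentially the argument the cited reference relies on.
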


Let us introduce the following notations: $\tilde{Z}_{n}$ denotes
a version of sum (\ref{eq:Zn}) where all $X_{i}$ are replaced by
i.i.d. Gaussian vectors $N_{i}\in\mathbb{R}^{d}$ with covariance
matrix $K$, i.e.,  \[ \tilde{Z}_{n}=\frac{1}{a\sqrt{n}}
\left(\sum_{i=1}^{n_{0}}N_{i}+ \sum_{i=n_{0}+1}^{n}N_{i}\right)=
\tilde{Z}_{n}^{\left(1\right)}+\tilde{Z}_{n}^{\left(2\right)}\]
The proof of theorem \ref{thm:qindependence} is based on the fact
that vectors $\tilde{Z}_{n}^{\left(1\right)}$ and
$\tilde{Z}_{n}^{\left(2\right)}$  are exactly $q-$independent (as
seen in subsection \ref{sub:qGaussianareqindependent} below).
Since $n$ is large, according to the above total variations
theorem \ref{thm:dtvclt}, $Z_{n}^{\left(1\right)}$ and
$Z_{n}^{\left(2\right)}$ are close to their $q-$Gaussian
counterparts $\tilde{Z}_{n}^{\left(1\right)}$ and
$\tilde{Z_{n}}^{\left(2\right)}$, respectively (see Lemma IV B
below). It remains to check that closeness between these vectors
can be stated in terms of their $q-$transforms. We proceed in five
steps, that invoke technical lemmas that are the subject of
Subsection C below. These steps are:

\begin{itemize}
\item step 1: components
$\tilde{Z}_{n}^{\left(1\right)}$ and
$\tilde{Z_{n}}^{\left(2\right)}$are exactly $q-$independent, as is
proved in  Thm. \ref{thm:subvectors} of subsection
\ref{sub:qGaussianareqindependent} below.

\item step 2: let us fix $\epsilon>0,$ and write
\begin{eqnarray*}
& \Vert F_{q}[Z_{n}^{(1)}
+Z_{n}^{(2)}]-F_{q}[Z_{n}^{(1)}]
\otimes_{q_1}F_{q}[Z_{n}^{(2)}]\Vert_{\infty} \cr
 & \le \Vert F_{q}[Z_{n}^{(1)}+Z_{n}^{(2)}]
 -F_{q}[\tilde{Z}_{n}^{(1)}+\tilde{Z}_{n}^{(2)}]\Vert_{\infty}\cr
 & + \, \Vert F_{q}[\tilde{Z}_{n}^{(1)}]
 \otimes_{q_1}F_{q}[\tilde{Z}_{n}^{(2)}]
 -F_{q}[Z_{n}^{(1)}]
 \otimes_{q_1}F_{q}[Z_{n}^{(2)}]\Vert_{\infty}
 \end{eqnarray*}

\item step 3: the first term $\Vert F_{q}[Z_{n}^{(1)}+Z_{n}^{(2)}]
 -F_{q}[\tilde{Z}_{n}^{(1)}
 +\tilde{Z}_{n}^{(2)}]\Vert_{\infty}
  = \Vert F_{q}[Z_{n}] -F_{q}[\tilde{Z}_{n}]\Vert_{\infty}$ can be bounded as follows
\[
\Vert F_{q}[Z_{n}] -F_{q}[\tilde{Z}_{n}]\Vert_{\infty}
  \le 2d_{TV}(Z_n,\tilde{Z}_n) \le 2d_{TV}(X_n,\tilde{X}_n)
 \]
where the first inequality follows from Lemma \ref{lem:dtv} and
the second one from Lemma \ref{lem:dtvmixture} below. Thus a value
$N_1$ can be chosen so that $n_0>N_1$ and $n_1>N_1$ ensure that
this term is smaller than $\frac{\epsilon}{2}.$

\item step 4: the second term $\Vert F_{q}[\tilde{Z}_{n}^{(1)}]
 \otimes_{q_1}F_{q}[\tilde{Z}_{n}^{(2)}]
 -F_{q}[Z_{n}^{(1)}]
 \otimes_{q_1}F_{q}[Z_{n}^{(2)}]\Vert_{\infty}$
can be bounded  by applying Lemma \ref{lem:dtvdtvinequality}: for
a large enough value of $n=n_0+n_1$, say $n>N_2$, we have
\[
\Vert F_{q}[\tilde{Z}_{n}^{(1)}]
 \otimes_{q_1}F_{q}[\tilde{Z}_{n}^{(2)}] -F_{q}[Z_{n}^{(1)}]
 \otimes_{q_1}F_{q}[Z_{n}^{(2)}]\Vert_{\infty}
 \le
 2d_{TV}(Z_{n}^{(1)},\tilde{Z}_{n}^{(1)})+2d_{TV}(Z_{n}^{(2)},\tilde{Z}_{n}^{(2)})
\]
Finally,  from the total variation CLT, there exists a value $N_3$
such that $n_0>N_3$ and $n_1>N_3$ implies that each of both total
variation divergences is smaller than $\frac{\epsilon}{4}$.

\item step 5:  The consideration of $N=\max(N_1,N_2,N_3)$ is then seen
 to prove  the linking theorem
\ref{thm:qindependence} 
\end{itemize}
We turn now our attention to those results that we have used in
this proof.

\subsection{\label{sub:qGaussianareqindependent} Components of
$q-$Gaussian vectors are $q-$independent}

We first begin to check that ``sub-vectors" extracted from
$q-$Gaussian vectors are exactly $q-$independent; this results is
obvious from the fact that, by the CLT given in \cite{umarov} (Thm. 4.1), these sub-vectors can be considered as limit cases of
sequences of $q-$independent sequences. However, the mathematical
verification of this property is of an instructive nature and we
proceed to give it. For readability, we will say that  $X \sim
(q,d)$ if $X$ is  a $q-$Gaussian vector of dimension $d$ and
nonextensivity parameter $q.$
\begin{theorem}
\label{thm:subvectors} If $1<q_0<1+\frac{2}{d}$  and vector
$X=[X_1^t,X_2^t]^t\sim (q_0,2d)$ with parameter $q_0>1$ then
vectors $X_1\sim (q,d)$ and $X_2\sim (q,d)$ and they are
$q-$independent:

\be
\label{correlation}
F_{q}\left[X_{1}+X_{2}\right]=F_{q}\left[X_{1}\right]\otimes_{q_1}F_{q}\left[X_{2}\right]
\ee
with $q=z(q_0)=\frac{2q_0+d(1-q_0)}{2+d(1-q_0)}>1$ and $q_1=z(q)>1.$
\end{theorem}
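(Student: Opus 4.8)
The plan is to lean on the scale-mixture (superstatistics) representation already exploited in Theorem \ref{thm:CLTq>1}: a $q_0$-Gaussian vector $X=[X_1^t,X_2^t]^t$ of dimension $2d$ is nothing but a $2d$-dimensional multivariate Student law $X=N/a$, where $N=[N_1^t,N_2^t]^t$ is a centered Gaussian vector and $a$ is the common chi-distributed scale shared by both halves. The number of degrees of freedom $m$ appearing in (\ref{eq:mq>1}) is exactly the Student index of this law, and a Student marginal preserves that index while lowering the dimension. Hence each sub-vector $X_i=N_i/a$ is again Student with the same $m$ but dimension $d$, i.e. $X_i\sim(q,d)$ with $q$ obtained by inserting $D=d$ in place of $D=2d$ into $\frac{1}{q-1}=\frac{m+D}{2}$. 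This gives $\frac{1}{q-1}=\frac{1}{q_0-1}-\frac{d}{2}$, and a one-line rearrangement produces $q=\frac{2q_0+d(1-q_0)}{2+d(1-q_0)}=z(q_0)$. The hypothesis $1<q_0<1+\frac2d$ is precisely what keeps the denominator $2-d(q_0-1)$ positive, so that $q>1$ and the two marginals are genuine $q$-Gaussians.

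For the $q$-independence identity (\ref{correlation}) I would move entirely to the $q$-Fourier side. The auxiliary fact I need --- which I would quote from \cite{umarov} or else verify by a direct integral --- is that the $q$-Fourier transform of a $d$-dimensional $q$-Gaussian of scale matrix $\Lambda$ is a $q_1$-Gaussian in the conjugate variable, $F_q[f](\xi)=e_{q_1}\!\left(-c\,\xi^t\Lambda\xi\right)$, with $q_1=z(q)$, an explicit constant $c>0$, and $e_{q_1}(u)=\left(1+(1-q_1)u\right)^{1/(1-q_1)}$. Because $F_q[f](0)=\int f=1$ for a normalized density, matching $e_{q_1}(0)=1$, no stray prefactor survives and the transform is genuinely a $q_1$-exponential of a quadratic form. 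I would then note that $X_1+X_2=(N_1+N_2)/a$ is itself a $d$-dimensional $q$-Gaussian with the same parameter $q$ (same common $a$, same dimension), and --- since the two halves are uncorrelated, so that $N_1$ and $N_2$ are independent --- with scale matrix $\Lambda_1+\Lambda_2$; this uncorrelated block structure is automatic in the CLT application of the linking theorem, where the halves are built from disjoint blocks of i.i.d. data. Consequently $F_q[X_1+X_2](\xi)=e_{q_1}\!\left(-c\,\xi^t(\Lambda_1+\Lambda_2)\xi\right)$.

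The identity now closes on the defining homomorphism property of the $q_1$-product. From $x\otimes_{q_1}y=(x^{1-q_1}+y^{1-q_1}-1)^{1/(1-q_1)}$ one gets, for $x=e_{q_1}(u)$ and $y=e_{q_1}(v)$, that $x^{1-q_1}+y^{1-q_1}-1=1+(1-q_1)(u+v)$, i.e. $e_{q_1}(u)\otimes_{q_1}e_{q_1}(v)=e_{q_1}(u+v)$. Applying this with $u=-c\,\xi^t\Lambda_1\xi$ and $v=-c\,\xi^t\Lambda_2\xi$ yields $F_q[X_1]\otimes_{q_1}F_q[X_2]=e_{q_1}\!\left(-c\,\xi^t(\Lambda_1+\Lambda_2)\xi\right)=F_q[X_1+X_2]$, which is exactly (\ref{correlation}); the same transform formula read off the sub-vectors identifies the index of $F_q[X_i]$ as $q_1=z(q)$, again $>1$ by the same positivity of the denominator.

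The hard part is the single computational input of the second paragraph: proving that $F_q$ carries a multivariate $q$-Gaussian of scale $\Lambda$ to a $q_1$-Gaussian whose quadratic argument is linear in $\Lambda$, with index shifted from $q$ to $z(q)$. This is where the real work sits, since it means rewriting $\left(f^{1-q}(x)+(1-q)ix^t\xi\right)^{1/(1-q)}$ as a complex-shifted $q$-Gaussian and integrating it by analytic continuation of the $q$-Gaussian normalization, while tracking how the $d$-dimensional integration moves the index along the map $z$. Everything else --- the Student marginalization, the additivity of scales, and the $\otimes_{q_1}$ bookkeeping --- is routine once this formula is established.
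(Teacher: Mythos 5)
Your proof follows essentially the same route as the paper's: identify the marginals $X_1,X_2$ and the sum $X_1+X_2$ as $(q,d)$ $q$-Gaussians, invoke the fact (Corollary 2.3 of \cite{umarov}) that $F_q$ maps a $q$-Gaussian to a $q_1$-Gaussian with $q_1=z(q)$, and observe that the $\otimes_{q_1}$-product of two $q_1$-exponentials of quadratic forms is again one, so that the two sides of (\ref{correlation}) coincide once their parameters are matched. Your Student/scale-mixture derivation of $X_i\sim(q,d)$ via $\frac{1}{q-1}=\frac{1}{q_0-1}-\frac{d}{2}$ is a welcome explicit version of a step the paper merely asserts, and the index bookkeeping ($q=z(q_0)$, positivity of $2-d(q_0-1)$) is correct.

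The one place where you are more optimistic than the evidence warrants is the claim that $F_q$ carries a $q$-Gaussian of scale $\Lambda$ to $e_{q_1}\left(-c\,\xi^t\Lambda\xi\right)$ with $c$ \emph{independent of} $\Lambda$. In the known univariate formula of Umarov--Tsallis the width parameter enters the transform through a power $\beta^{-(2-q)}$, not $\beta^{-1}$, so the quadratic argument of $F_q$ is not linear in the scale matrix; the additivity $c\Lambda_1+c\Lambda_2=c(\Lambda_1+\Lambda_2)$ that closes your argument therefore does not follow merely from additivity of the Gaussian components' covariances, and the exact dependence of the transform's coefficient on $\Lambda$ must be tracked before (\ref{correlation}) can be declared proved. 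To be fair, this is exactly the point the paper itself hides behind ``the fact that both terms have same covariance matrices is straightforward'', so you have made an existing soft spot visible rather than introduced a new one --- but as written your final matching step is not yet a proof. A smaller issue: the theorem does not assume the two blocks of $X$ are uncorrelated, whereas your identification of the scale of $X_1+X_2$ as $\Lambda_1+\Lambda_2$ does; you should either add that hypothesis (it does hold in the linking-theorem application) or carry the cross terms.
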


\begin{proof}
Since $X_1 \sim (q,d)$, we know from the  Corollary 2.3 of
\cite{umarov} that $F_{q}\left[X_{1}\right] \sim (q_1,d).$
Moreover, since $X_1$ and $X_2$ are components of the same
$q-$Gaussian vector, from \cite{VP} we deduce that $X_1+X_2 \sim
(q,d)$ so that $F_{q}\left[X_{1}+X_{2}\right]\sim(q_1,d).$
Finally, it is easy to check that since
$F_{q}\left[X_{1}\right]\sim(q_1,d)$ and
$F_{q}\left[X_{2}\right]\sim(q_1,d)$ then
$F_{q}\left[X_{1}\right]\otimes_{q_1}F_{q}\left[X_{2}\right] \sim
(q_1,d)$. The fact that both terms have same covariance matrices is straightforward, what proves the result.
\end{proof}
We note that $q-$correlation (\ref{correlation}) corresponds to $q-$independence  of the third kind as listed in Table 1 of
\cite{umarov}. We pass now to the consideration of the four Lemmas
invoked in the proof of the linking theorem.
\subsection{Technical lemmas}

As we are concerned with scale mixtures of Gaussian vectors, we
need the following lemma.

\begin{lemma}
\label{lem:dtvmixture} If $U$ and $V$ are random vectors  in
$\mathbb{R}^d$ and $a$ is a random variable independent of $U$ and
$V$ then \be d_{TV}\left(\frac{U}{a},\frac{V}{a}\right)\le
d_{TV}\left(U,V\right).\ee
\end{lemma}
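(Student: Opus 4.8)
The plan is to condition on the value of the random variable $a$ and exploit the fact that total variation is invariant under the deterministic scaling $x \mapsto x/a$. First I would recall that, for a fixed nonzero scalar $c$, if $U$ has density $f$ then $U/c$ has density $|c|^d f(c\,\cdot)$ on $\mathbb{R}^d$; a change of variables $y = x/c$ in the defining integral gives $d_{TV}(U/c, V/c) = \frac{1}{2}\int |f(x)-g(x)|\,dx = d_{TV}(U,V)$, so the scaling is an \emph{isometry} for $d_{TV}$. This is the conceptual heart of the argument: dividing both $U$ and $V$ by the same constant changes neither their individual distributions' shapes nor, crucially, their discrepancy.

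Next I would write the densities of $U/a$ and $V/a$ by integrating over $a$. Since $a$ is independent of $U$ and $V$, the density of $U/a$ is the mixture $h_U(x) = \int |c|^d f(cx)\, dP_a(c)$ (and similarly $h_V$ with $g$), where $P_a$ is the law of $a$. Then
\begin{eqnarray*}
d_{TV}\!\left(\tfrac{U}{a},\tfrac{V}{a}\right)
&=& \frac{1}{2}\int_{\mathbb{R}^d} \left| h_U(x) - h_V(x)\right| dx \\
&=& \frac{1}{2}\int_{\mathbb{R}^d} \left| \int |c|^d\big(f(cx)-g(cx)\big)\, dP_a(c)\right| dx.
\end{eqnarray*}
The key step is to pull the absolute value inside the $a$-integral via the triangle inequality (Jensen/Fubini), then swap the order of integration, which is justified because the integrand is absolutely integrable, and finally invoke the scaling isometry established above:
\begin{eqnarray*}
d_{TV}\!\left(\tfrac{U}{a},\tfrac{V}{a}\right)
&\le& \frac{1}{2}\int_{\mathbb{R}^d}\!\int \left| |c|^d\big(f(cx)-g(cx)\big)\right| dP_a(c)\, dx \\
&=& \int d_{TV}(U,V)\, dP_a(c) = d_{TV}(U,V),
\end{eqnarray*}
where the inner spatial integral collapses to $2\,d_{TV}(U,V)$ for each fixed $c$ by the change of variables, and this constant then integrates against the probability measure $P_a$ to give $d_{TV}(U,V)$ unchanged.

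The main obstacle I anticipate is purely measure-theoretic bookkeeping rather than anything deep: one must ensure that $U$ and $V$ actually admit densities (or else phrase the whole argument in terms of measures and the sup over measurable sets in the definition of $d_{TV}$), and one must justify the interchange of the absolute value, the $a$-integral, and the $x$-integral by Fubini--Tonelli using the nonnegative integrand $|c|^d|f-g|$. A secondary subtlety is the event $a=0$, which I would dismiss by noting that $a$ is chi-distributed and hence almost surely positive, so $1/a$ is well defined a.s. and contributes no mass at the bad point. If one prefers to avoid densities altogether, the slick alternative is to use the coupling/variational characterization: take an optimal coupling $(U,V)$ achieving $d_{TV}(U,V) = \Pr\{U \neq V\}$, observe that $\{U/a \neq V/a\} \subseteq \{U \neq V\}$ pointwise (since $a \neq 0$ a.s.), and conclude $d_{TV}(U/a, V/a) \le \Pr\{U/a \neq V/a\} \le \Pr\{U \neq V\} = d_{TV}(U,V)$, which sidesteps all integrability concerns in one line.
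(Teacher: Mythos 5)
Your main argument is essentially the paper's own proof: write the densities of $U/a$ and $V/a$ as scale mixtures over the law of $a$, pull the absolute value inside the $a$-integral by the triangle inequality, swap the order of integration, and let the change of variables collapse the inner spatial integral to $2\,d_{TV}(U,V)$ for each fixed scale, which then integrates against the probability density of $a$ to give the bound. The one-line coupling argument you offer at the end is a slicker alternative the paper does not use, but since your primary route coincides with the paper's, the proposal is correct and matches.
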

\begin{proof}
The distributions of scale mixtures $U/a$ and $V/a$ write, in
terms of the distributions of $U$ and of $V$, in the fashion \be
f_{U/a}\left(x\right)= \int_{\mathbb{R}^{+}}\frac{1}{a^{d}}
f_{a}\left(a\right)f_{U}\left(\frac{x}{a}\right)da, \,\,
g_{V/a}\left(x\right)=\int_{\mathbb{R}^{+}}\frac{1}{a^{d}}f_{a}
\left(a\right)f_{V}\left(\frac{x}{a}\right)da. \ee
 It thus follows that\begin{eqnarray*}
d_{TV}\left(\frac{U}{a},\frac{V}{a}\right) & = &
\frac{1}{2}\int_{\mathbb{R}^{d}}\vert f_{U/a}\left(x\right)-f_{V/a}\left(x\right)\vert dx\\
 & = & \frac{1}{2}\int_{\mathbb{R}^{d}}\vert\int_{\mathbb{R}^{+}}\frac{1}{a^{d}}f_{a}
 \left(a\right)\left(f_{U}\left(\frac{x}{a}\right)-
 f_{V}\left(\frac{x}{a}\right)\right)da\vert dx\\
 & \le & \frac{1}{2}\int_{\mathbb{R}^{d}}\int_{\mathbb{R}^{+}}\frac{1}{a^{d}}
 f_{a}\left(a\right)\vert f_{U}
 \left(\frac{x}{a}\right)-f_{V}\left(\frac{x}{a}\right)\vert dadx\\
 & = & \frac{1}{2}\int_{\mathbb{R}^{+}}\frac{1}{a^{d}}f_{a}
 \left(a\right)da\int_{\mathbb{R}^{d}}\vert f_{U}\left(z\right)-f_{V}
 \left(z\right)\vert a^{d}dz\\
 & = & \frac{1}{2}\int_{\mathbb{R}^{+}}f_{a}\left(a\right)da
 \int_{\mathbb{R}^{d}}\vert f_{U}\left(z\right)-f_{V}\left(z\right)\vert dz\\
 & = & \frac{1}{2}\int_{\mathbb{R}^{d}}
 \vert
 f_{U}-f_{V}\vert=d_{TV}\left(U,V\right).\end{eqnarray*}
\end{proof}

We  also needed above  the following
\begin{lemma}
\label{lem:function} For $q>1$ and $\Re\left(z\right)\ge0,$ the function
\begin{eqnarray*}
\psi_{q,z}:\mathbb{R}^{+} & \rightarrow & \mathbb{C}\\
x & \mapsto &
\left(x^{1-q}+z\right)^{\frac{1}{1-q}}\end{eqnarray*}
 is a Lipschitz function with unit constant:
 \be
 \vert\psi_{q,z}\left(x_{1}\right)-\psi_{q,z}\left(x_{0}\right)\vert
 \le
 \vert x_{1}-x_{0}\vert,
 \ee
\end{lemma}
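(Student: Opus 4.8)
The plan is to reduce the Lipschitz estimate to a pointwise bound on the modulus of the derivative of $\psi_{q,z}$. Since $x^{1-q}>0$ is real and positive for $x>0$ and $\Re(z)\ge 0$, the argument $x^{1-q}+z$ has real part $x^{1-q}+\Re(z)>0$ and therefore lies in the open right half-plane, away from the branch cut of the principal power. Hence $\psi_{q,z}$ is a smooth $\mathbb{C}$-valued function of the real variable $x\in\mathbb{R}^{+}$, and writing $\psi_{q,z}(x_{1})-\psi_{q,z}(x_{0})=\int_{x_{0}}^{x_{1}}\psi_{q,z}^{\prime}(x)\,dx$ gives $\vert\psi_{q,z}(x_{1})-\psi_{q,z}(x_{0})\vert\le\sup_{x>0}\vert\psi_{q,z}^{\prime}(x)\vert\cdot\vert x_{1}-x_{0}\vert$. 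So it suffices to prove that $\vert\psi_{q,z}^{\prime}(x)\vert\le 1$ for every $x>0$.

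First I would compute the derivative by the chain rule, obtaining $\psi_{q,z}^{\prime}(x)=x^{-q}(x^{1-q}+z)^{\frac{q}{1-q}}$, where the exponent simplifies from $\frac{1}{1-q}-1$ to $\frac{q}{1-q}$. Taking moduli, and noting that the real exponent $\frac{q}{1-q}$ satisfies $\vert w^{\gamma}\vert=\vert w\vert^{\gamma}$ for real $\gamma$, yields $\vert\psi_{q,z}^{\prime}(x)\vert=x^{-q}\,\vert x^{1-q}+z\vert^{\frac{q}{1-q}}$. The key step is then to exploit the hypothesis $\Re(z)\ge 0$: since $\vert w\vert\ge\Re(w)$ for any complex $w$ and $\Re(x^{1-q}+z)=x^{1-q}+\Re(z)\ge x^{1-q}>0$, we have $\vert x^{1-q}+z\vert\ge x^{1-q}$. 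Because $q>1$ forces $\frac{q}{1-q}<0$, the map $t\mapsto t^{\frac{q}{1-q}}$ is decreasing on $\mathbb{R}^{+}$, so the inequality reverses: $\vert x^{1-q}+z\vert^{\frac{q}{1-q}}\le(x^{1-q})^{\frac{q}{1-q}}=x^{q}$. Substituting, $\vert\psi_{q,z}^{\prime}(x)\vert\le x^{-q}\cdot x^{q}=1$, which is exactly the bound needed, and integrating delivers the claimed unit-constant Lipschitz inequality.

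The argument is short, and its only delicate aspect (the main obstacle) is the sign bookkeeping. One must verify that the assumption $\Re(z)\ge 0$ is used in two places at once, namely to keep $\psi_{q,z}$ off the branch cut and to secure $\vert x^{1-q}+z\vert\ge x^{1-q}$, and that the sign of the exponent $\frac{q}{1-q}$ is negative precisely because $q>1$, which is what makes the monotonicity flip the inequality in the favorable direction. The exact cancellation $x^{-q}\cdot x^{q}=1$ is the pleasant feature the computation reveals: it is what forces the Lipschitz constant to be \emph{unity} rather than merely finite.
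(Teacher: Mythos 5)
Your proposal is correct and follows essentially the same route as the paper: both reduce the claim to the pointwise bound $\vert\psi_{q,z}'(x)\vert\le 1$ and obtain it from $\Re(z)\ge 0$ together with the negativity of the exponent $\frac{q}{1-q}$. The only cosmetic difference is that the paper first rewrites the derivative as $\left(1+zx^{q-1}\right)^{-q/(q-1)}$ before bounding its modulus, whereas you bound $\vert x^{1-q}+z\vert^{q/(1-q)}$ directly; your version also makes explicit the inequality $\vert w\vert\ge\Re(w)$ that the paper leaves implicit.
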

\begin{proof}
We have \be
\vert\psi_{q,z}\left(x_{1}\right)-\psi_{q,z}\left(x_{0}\right)\vert\le\sup_{x_{0}\le
x\le x_{1}}\vert\psi_{q,z}'\left(x\right)\vert\vert
x_{1}-x_{0}\vert, \ee
 where \be
\psi_{q,z}'\left(x\right)=\frac{1}{\left(1+zx^{q-1}\right)^{\frac{q}{q-1}}},\ee
 with $\frac{q}{q-1}>0,$ so that, since $x>0$ and $\Re\left(z\right)\ge0,$ \be
\vert\psi_{q,z}'\left(x\right)\vert=\frac{1}{\vert1+zx^{q-1}\vert^{\frac{q}{q-1}}}\le
1.\ee
\end{proof}
Two straightforward consequences of such inequality are the
following lemmas, that we have also used above.

\begin{lemma}
\label{lem:dtv} For any random vectors $U$ and $V,$ if $q\ge1,$
the following inequality holds 
\begin{equation} \Vert
F_{q}\left[U\right]- F_{q}\left[V\right]\Vert_{\infty}\le
2d_{TV}\left(U,V\right).\label{eq:inequality}
\end{equation}
\end{lemma}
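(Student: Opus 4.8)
The plan is to identify the integrand defining the difference of the two $q$-Fourier transforms as a difference of values of the unit-Lipschitz function $\psi_{q,z}$ of Lemma~\ref{lem:function}, and then to bring the modulus inside the defining integral. Write $f=f_{U}$ and $g=f_{V}$ for the densities of $U$ and $V$. By the definition of the $q$-Fourier transform, for every fixed $\xi\in\mathbb{R}^{d}$,
\be
F_{q}[U](\xi)-F_{q}[V](\xi)=\int_{\mathbb{R}^{d}}\left[\bigl(f^{1-q}(x)+(1-q)ix^{t}\xi\bigr)^{\frac{1}{1-q}}-\bigl(g^{1-q}(x)+(1-q)ix^{t}\xi\bigr)^{\frac{1}{1-q}}\right]dx.
\ee

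First I would dispose of the case $q>1$. Fix $\xi$ and the integration point $x$, and set $z=(1-q)ix^{t}\xi$, which depends on $x$. Because $x^{t}\xi$ is real, $z$ is purely imaginary, so $\Re(z)=0\ge0$ and Lemma~\ref{lem:function} applies with this $z$; note also that the base $y^{1-q}+z$ then has strictly positive real part $y^{1-q}>0$ for $y>0$, so the principal branch is unambiguous. With this choice the bracketed term is exactly $\psi_{q,z}\bigl(f(x)\bigr)-\psi_{q,z}\bigl(g(x)\bigr)$, and the unit-Lipschitz estimate gives the pointwise bound $\bigl|\psi_{q,z}(f(x))-\psi_{q,z}(g(x))\bigr|\le|f(x)-g(x)|$ (the values at points where $f$ or $g$ vanishes being recovered by continuity, since $\psi_{q,z}(y)\to0$ as $y\to0^{+}$ when $q>1$).

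Applying the triangle inequality under the integral sign and then this pointwise bound at each $x$ with its corresponding $z$ yields
\be
|F_{q}[U](\xi)-F_{q}[V](\xi)|\le\int_{\mathbb{R}^{d}}|f(x)-g(x)|\,dx=2d_{TV}(U,V),
\ee
the last equality being just the definition of total variation. As the right-hand side does not depend on $\xi$, taking the supremum over $\xi$ gives the inequality for $q>1$. The remaining boundary case $q=1$ is immediate, since there $F_{1}$ reduces to the ordinary characteristic function $\int f(x)e^{ix^{t}\xi}dx$ and $|e^{ix^{t}\xi}|=1$ gives $|F_{1}[U](\xi)-F_{1}[V](\xi)|\le\int|f-g|=2d_{TV}(U,V)$ directly.

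The only genuinely delicate point — and the one place that repays care — is the verification that the substitution $z=(1-q)ix^{t}\xi$ satisfies the hypothesis $\Re(z)\ge0$ required by Lemma~\ref{lem:function}. This holds precisely because the spectral variable $\xi$ enters the $q$-Fourier transform through the purely imaginary combination $ix^{t}\xi$, placing $z$ on the imaginary axis for all $x$ and $\xi$. Once this identification is secured, the rest is a pointwise Lipschitz estimate followed by integration, and no further analytic input is needed.
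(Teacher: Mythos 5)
Your proof is correct and follows essentially the same route as the paper's: both identify the integrand as a difference of values of the unit-Lipschitz function $\psi_{q,z}$ with $z=(1-q)ix^{t}\xi$ purely imaginary, apply Lemma \ref{lem:function} pointwise, and integrate to obtain $2d_{TV}(U,V)$. Your explicit treatment of the $q=1$ boundary case and of the behaviour of $\psi_{q,z}$ as its argument tends to $0^{+}$ are minor additions the paper leaves implicit.
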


\begin{proof}
This result is a  straightforward consequence of inequality
(34) of reference \cite{umarov}. However, an elementary proof
writes as follows: denote $f_{U}$ and $f_{V}$ the respective
probability densities of $U$ and $V$. Then,
$\forall\xi\in\mathbb{R}^{d},$\begin{eqnarray*} & \vert
F_{q}\left[U\right]\left(\xi\right)-F_{q}\left[V\right]\left(\xi\right)\vert
\cr & \le  \int_{\mathbb{R}^{d}}
\vert\left(f_{U}^{1-q}\left(x\right)+
\left(1-q\right)ix^{t}\xi\right)^{\frac{1}{1-q}}-
\left(f_{V}^{1-q}\left(x\right)+\left(1-q\right)ix^{t}\xi\right)^{\frac{1}{1-q}}\vert
dx\end{eqnarray*} As $\Re\left(\left(1-q\right)ix^{t}\xi\right)=0$
and $f_{U}\ge0,$ by lemma \ref{lem:function}, the integrand is
bounded by $\vert f_{U}\left(x\right)-f_{V}\left(x\right)\vert;$
since this holds $\forall\xi\in\mathbb{R}^{d},$ the desired result
follows.
\end{proof}
We remark here that inequality (\ref{eq:inequality}) is a simple
generalization of the well-known $q=1$ case, in which $F_{q=1}$
corresponds to the classical Fourier transform. Thus a well-known
result of the Fourier theory is reproduced, namely
\[ \Vert
F_{1}\left[U\right]-F_{1}\left[V\right]\Vert_{\infty}\le
2d_{TV}\left(U,V\right).\]
As another consequence of lemma \ref{lem:function} we have

\begin{lemma}
\label{lem:dtvdtvinequality} For notational simplicity, let us
denote as
$Z_1=Z_{n}^{(1)},Z_2=Z_{n}^{(2)},\tilde{Z}_1=\tilde{Z}_{n}^{(1)}$
and $\tilde{Z}_2=\tilde{Z}_{n}^{(2)}$ those random vectors defined
in part IV.A. Then, for $n$ large enough,
\[ \Vert F_{q}[Z_{1}](\xi)
\otimes_{q_1}F_{q}[Z_{2}]
(\xi)-F_{q}[\tilde{Z}_{1}]
(\xi)\otimes_{q_1}F_{q}[\tilde{Z}_{2}](\xi)\Vert_{\infty}\le
2d_{TV}(Z_{1},\tilde{Z}_{1})+2d_{TV}(Z_{2},\tilde{Z}_{2}).
\]
\end{lemma}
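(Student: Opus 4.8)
The plan is to reduce the difference of two $q_1$-products to a sum of differences of single $q$-Fourier transforms, exploit the unit-Lipschitz estimate of Lemma \ref{lem:function}, and then convert the latter into total-variation distances through Lemma \ref{lem:dtv}. Write, for brevity, $G_1=F_q[Z_1]$, $G_2=F_q[Z_2]$, $\tilde G_1=F_q[\tilde Z_1]$ and $\tilde G_2=F_q[\tilde Z_2]$, each regarded as a function of $\xi$. Recalling that $u\otimes_{q_1}v=(u^{1-q_1}+v^{1-q_1}-1)^{\frac{1}{1-q_1}}$, I would first insert the mixed term $\tilde G_1\otimes_{q_1}G_2$ and use the triangle inequality pointwise in $\xi$,
\[
\bigl|G_1\otimes_{q_1}G_2-\tilde G_1\otimes_{q_1}\tilde G_2\bigr|
\le \bigl|G_1\otimes_{q_1}G_2-\tilde G_1\otimes_{q_1}G_2\bigr|
+\bigl|\tilde G_1\otimes_{q_1}G_2-\tilde G_1\otimes_{q_1}\tilde G_2\bigr|,
\]
so that in each bracket a single argument of the $q_1$-product varies while the other is frozen. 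I would deliberately freeze the \emph{Gaussian} factor $\tilde G_1$ in the second bracket (see below), which keeps the associated shift parameter unconditionally admissible.

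Next I would read each bracket as an increment of the map $\psi_{q_1,z}(x)=(x^{1-q_1}+z)^{\frac{1}{1-q_1}}$ of Lemma \ref{lem:function}. In the first bracket the second slot is frozen, so with $z=G_2^{\,1-q_1}-1$ it equals $\psi_{q_1,z}(G_1)-\psi_{q_1,z}(\tilde G_1)$; in the second bracket the first slot is frozen, so with $z'=\tilde G_1^{\,1-q_1}-1$ it equals $\psi_{q_1,z'}(G_2)-\psi_{q_1,z'}(\tilde G_2)$. Once the hypotheses of Lemma \ref{lem:function} are in force, its unit-Lipschitz bound yields, pointwise in $\xi$,
\[
\bigl|G_1\otimes_{q_1}G_2-\tilde G_1\otimes_{q_1}\tilde G_2\bigr|
\le \vert G_1-\tilde G_1\vert+\vert G_2-\tilde G_2\vert .
\]
Taking the supremum over $\xi\in\mathbb{R}^d$ and then applying Lemma \ref{lem:dtv} termwise, $\Vert F_q[U]-F_q[V]\Vert_\infty\le 2d_{TV}(U,V)$, produces exactly the claimed bound $2d_{TV}(Z_1,\tilde Z_1)+2d_{TV}(Z_2,\tilde Z_2)$.

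The delicate point, and the reason for the qualifier ``$n$ large enough'', is the verification of the hypotheses of Lemma \ref{lem:function} in each bracket: that the argument of $\psi_{q_1,z}$ lies in $\mathbb{R}^{+}$ and that the shift satisfies $\Re(z)\ge 0$. Here the symmetry assumption on the $X_i$ is essential, since it makes each $Z_n^{(j)}$ and $\tilde Z_n^{(j)}$ symmetric, hence renders $G_1,G_2,\tilde G_1,\tilde G_2$ real-valued with value $1$ at $\xi=0$. For the frozen Gaussian slot $\tilde G_1$ one has a genuine $q_1$-Gaussian, positive and bounded by $1$, whence $z'=\tilde G_1^{\,1-q_1}-1\ge 0$ because $1-q_1<0$; thus the second bracket is unconditionally controlled, the only remaining issue there being positivity of the arguments $G_2,\tilde G_2$. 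The first bracket is the genuinely asymptotic one, as its shift $z=G_2^{\,1-q_1}-1$ requires the \emph{non}-Gaussian transform $G_2$ to be real, positive and $\le 1$. I would secure this for large $n$ by combining Lemma \ref{lem:dtv} with the total-variation CLT (Theorem \ref{thm:dtvclt}), which forces each $F_q[Z_n^{(j)}]$ to be uniformly close to the corresponding $q_1$-Gaussian $F_q[\tilde Z_n^{(j)}]$. The hard part will be to make these reality, positivity and boundedness conditions hold \emph{uniformly in $\xi$} — in particular in the tail region where the limiting $q_1$-Gaussian decays to zero and the sup-norm closeness no longer protects the sign of $G_j$ — and it is precisely this uniform control that restricts the statement to the regime of large $n$.
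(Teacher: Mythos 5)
Your proposal follows essentially the same route as the paper's proof: the same insertion of the mixed term $F_{q}[\tilde{Z}_{1}]\otimes_{q_1}F_{q}[Z_{2}]$, the same reading of each bracket as an increment of $\psi_{q_1,z}$, the same use of the $q$-Gaussianity of $\tilde{Z}_{1},\tilde{Z}_{2}$ to secure $\Re(z)\ge 0$ in the frozen slots, and the same appeal to the total-variation CLT together with Lemma \ref{lem:dtv} and the symmetry of the data to obtain $F_{q}^{1-q_1}[Z_{2}](\xi)-1\ge 0$ for $n$ large. The uniform-in-$\xi$ difficulty you flag at the end is genuine, but the paper's own proof disposes of it with a one-line appeal to ``continuity'' without further justification, so on that point you are, if anything, more careful than the original.
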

\begin{proof}

For any $\xi\in\mathbb{R}^{d},$
\begin{eqnarray*}
\vert F_{q}[Z_{1}]
(\xi )\otimes_{q_1} F_{q}[Z_{2}]
(\xi )-F_{q} [\tilde{Z}_{1}](\xi )
\otimes_{q_1} F_{q} [\tilde{Z}_{2}](\xi )\vert \\
\le \vert F_{q}[Z_{1}]
(\xi )\otimes_{q_1} F_{q}[Z_{2}]
(\xi )-F_{q} [\tilde{Z}_{1}](\xi )
\otimes_{q_1} F_{q} [Z_{2}](\xi )\vert \\
+\vert F_{q}[\tilde{Z}_{1}]
(\xi )\otimes_{q_1} F_{q}[Z_{2}]
(\xi )-F_{q} [\tilde{Z}_{1}](\xi )
\otimes_{q_1} F_{q} [\tilde{Z}_{2}](\xi )\vert \\
=\vert\psi_{q_1,F_{q}^{1-q_1} [Z_{2} ]
 (\xi )-1} (F_{q} [Z_{1} ]
 (\xi ) )-\psi_{q_1,F_{q}^{1-q_1}
 [Z_{2} ] (\xi )-1}
 (F_{q} [\tilde{Z}_{1} ] (\xi ) )\vert
\\
+\vert\psi_{q_1,F_{q}^{1-q_1} [\tilde{Z}_{1} ]
 (\xi )-1} (F_{q} [Z_{2} ]
 (\xi ) )-\psi_{q_1,F_{q}^{1-q_1}
 [\tilde{Z}_{1} ] (\xi )-1}
 (F_{q} [\tilde{Z}_{2} ] (\xi ) )\vert
\end{eqnarray*}

Since $\tilde{Z}_{2}$ is  $q-$Gaussian, and since
$1<q<1+\frac{2}{d},$ there exists an $\alpha_2\ge0$ (as given in
equation (15) of reference \cite{umarov}) such that $F_{q}^{1-q_1}
[\tilde{Z}_{2} ] (\xi )-1=\alpha_2(q_1-1)\xi^2$ so that, since
$q_1 > 0$, it follows that $F_{q}^{1-q_1}[\tilde{Z}_{2} ] (\xi
)\ge1.$ From the CLT in total variation, we can choose $n$ large
enough so that $d_{TV}(F_{q} [Z_{2} ],F_{q} [\tilde{Z}_{2} ])$ is
arbitrarily small, which in turns implies, by Lemma \ref{lem:dtv},
that $\vert F_{q} [Z_{2} ](\xi) - F_{q} [\tilde{Z}_{2} ](\xi)
\vert$ is arbitrarily small as well. By continuity of the function
$x \mapsto x^{1-q_1}-1$, and since ${F_{q}[{Z}_{2}]}$ is
real-valued by the symmetry of the data, this ensures that
$F_{q}^{1-q_1} [{Z}_{2} ] (\xi )-1\ge0.$ Thus, the first term can
be bounded using lemma \ref{lem:function} in the fashion \[
\vert\psi_{q_1,F_{q}^{1-q_1} [Z_{2} ]
 (\xi )-1} (F_{q} [Z_{1} ]
 (\xi ) )-\psi_{q_1,F_{q}^{1-q_1}
 [Z_{2} ] (\xi )-1}
 (F_{q} [\tilde{Z}_{1} ] (\xi ) )\vert
\le \vert F_{q} [\tilde{Z}_{1} ] (\xi )-
F_{q} [{Z}_{1} ] (\xi )\vert.
\]
Accordingly, since $\tilde{Z}_1$ is $q-$Gaussian,  there exists
$\alpha_1 \ge 0$ such that $F_{q}^{1-q_1}[\tilde{Z}_{1}](\xi
)-1=\alpha_1(q_1-1)\xi^2$, hence $F_{q}^{1-q_1}[\tilde{Z}_{1}](\xi
)\ge 1.$ Recourse again to lemma \ref{lem:function} yields
\[
\vert\psi_{q_1,F_{q}^{1-q_1}[\tilde{Z}_{1}]
 (\xi )-1} (F_{q} [Z_{2} ]
 (\xi ) )-\psi_{q_1,F_{q}^{1-q_1}
 [\tilde{Z}_{1} ] (\xi )-1}
 (F_{q} [\tilde{Z}_{2} ] (\xi ) )\vert
\le \vert F_{q} [\tilde{Z}_{2} ] (\xi )-
F_{q} [{Z}_{2} ] (\xi )\vert.
\]
Applying now lemma \ref{lem:dtv} to each of both terms above
yields
\[
\vert F_{q}[Z_{1}](\xi )\otimes_{q_1} F_{q}[Z_{2}]
(\xi )-F_{q} [\tilde{Z}_{1}](\xi )
\otimes_{q_1} F_{q} [\tilde{Z}_{2}](\xi )\vert
\le 2d_{TV}(Z_{1},\tilde{Z}_{1})+2d_{TV}(Z_{2},\tilde{Z}_{2}).
\]
As this holds for any value of $\xi \in \mathbb{C}$, the result follows.
\end{proof}

\section{Conclusions}

We have here dealt with non-conventional central limit theorems,
whose attractor is a deformed or q-Gaussian. Based on the
Beck-Cohen notion of superstatistics \cite{B1}, with scale
mixtures relating random variables \`a la Eq. (\ref{supers}), it
has been shown that there exists a very natural extension of the
central limit theorem to these deformed exponentials that
quite favorably compares with the one provided by S. Umarov and C.
Tsallis {[}arXiv:cond-mat/0703533]. This is so because the latter
requires a special ``q-independence condition on the data". On the
contrary, our CLT proposal applies exactly in the usual conditions
in which the classical CLT is used. However, links between ours
and the Umarov-Tsallis treatment have also been established, which
makes the here reported CLT a hopefully convenient tool for
understanding the intricacies of the physical processes described
by power-laws probability distributions, as exemplified, for
instance, by the examples reported in  \cite{uno} (and references
therein).

\end{document}